\documentclass[12pt]{article}

\usepackage[margin=1in]{geometry}
\usepackage{amsmath,amssymb,amsthm}
\usepackage{setspace}
\usepackage{enumitem}
\usepackage{hyperref}
\usepackage{natbib}

\newtheorem{proposition}{Proposition}

\title{The Benchmark Ceiling: Human Judgment, Evaluation Scarcity, and the Political Economy of AI Capability Measurement}
\author{Mark Esposito \and Liu Zhang }
\date{Working Paper \textbar{} May 2026}

\begin{document}
\maketitle

\begin{abstract}
Benchmarks are the primary instruments through which AI capability is measured, compared, and governed. This paper argues that the validity of frontier AI benchmarks is a function of the quality of human judgment embedded in their construction, and that this quality is structurally scarce in ways that standard scaling narratives obscure. As foundation models approach ceiling performance on existing evaluation suites, discriminating signal concentrates in the hardest benchmark items, precisely those requiring elite expert judgment to design. We term this the benchmark ceiling problem: the progressive exhaustion of evaluation signal as models saturate the easy majority of items while the difficult tail, authored by a thin stratum of highly expert evaluators, remains the only source of genuine discrimination.

The paper develops this argument in three steps. First, we present a formal model of benchmark signal depreciation. Benchmark scores are public signals of latent model quality, but their precision depends endogenously on benchmark validity. As frontier capability rises and as contamination or strategic optimization increases, fixed benchmarks depreciate as measurement instruments. The model shows that valid signal concentrates in hard-tail items, that the replacement cost of such items rises convexly with frontier capability, and that private benchmark producers underinvest in validity relative to the social optimum. Second, drawing on platform data from micro1 covering over one thousand credentialed professionals, we document the scarcity premium associated with high-judgment, low-codifiability evaluation labor. Third, we develop the political economy and governance implications. Benchmark control creates epistemic power over the narrative of AI progress, while item-level transparency and procedural transparency have opposite effects on benchmark validity. The policy implication is not a choice between fully public and fully private benchmarks, but a regime of protected live item pools, transparent procedures, independent governance, and sustained public or quasi-public investment in frontier evaluation infrastructure.
\end{abstract}

\noindent \textbf{Keywords:} AI benchmarks; evaluation validity; human judgment; benchmark contamination; public goods; AI governance; structured data markets; expert labor scarcity.

\section{Introduction: Benchmarks as Governance Infrastructure}

Something notable is happening at the top of the AI industry. The leaders of the organizations building the most capable AI systems, those with the deepest visibility into what those systems can and cannot do, are converging on a shared diagnosis: the bottleneck to safe and productive deployment is not capability, but evaluation. The CEO of Anthropic has stated publicly that current AI systems are not reliable enough for fully autonomous deployment in high-stakes domains. The CEO of NVIDIA has identified reliability as the decisive factor in enterprise AI adoption. The CEO of Databricks has argued that human feedback loops are what make AI systems production-ready. The CEO of IBM has framed AI quality as a function of the testing and governance infrastructure that surrounds it. The CEO of Microsoft has called for continuous monitoring and evaluation after deployment, not only before it. These are not the statements of researchers at the margins of the field. They are the operational conclusions of practitioners who must ship systems that work, at scale, in conditions where failure has real consequences. Their convergence on evaluation as the binding constraint is a signal that the research and policy community should take seriously.

When a foundation model is described as achieving state-of-the-art performance, that claim is almost always a benchmark claim. It means the model scored highest on some evaluation suite, a collection of items designed, typically by humans, to probe capabilities that researchers and policymakers have decided matter. The benchmark score is the number that gets reported in press releases, cited in regulatory filings, and used to rank competing systems in the public discourse of AI progress. It is, in effect, a governance instrument: the primary device through which claims about AI capability are made legible, comparable, and contestable.

And yet the production of benchmarks -- how they are constructed, who constructs them, what epistemological assumptions they embed, and how those assumptions age as models improve -- receives remarkably little attention in either the economics or governance literature on AI. The assumption, implicit in most treatments of benchmark results, is that evaluation is a stable background condition: a neutral measuring device that tracks underlying capability as models are trained and scaled. This paper argues that assumption is wrong in ways that matter deeply for how we understand AI progress and design policy responses to it. The practitioner consensus described above, far from reflecting a transient anxiety about a young technology, reflects a structural problem: the instruments through which AI capability is measured are being consumed by the very progress they are designed to track, and the human judgment required to replenish those instruments is both scarce and systematically undervalued.

The central problem is one we call the benchmark ceiling. Benchmark items have a difficulty distribution. As foundation models improve, they systematically saturate the easier portion of that distribution: items that most capable humans can answer correctly, or that can be addressed by pattern-matching against training data, are solved first and become uninformative. Discriminating signal concentrates progressively in the hard tail. That hard tail was, in the first instance, authored by elite expert evaluators, individuals whose professional judgment is scarce, expensive, and not easily reproducible at scale. When those hard items are eventually saturated too, the benchmark dies as a measurement instrument. The sequence -- easy items solved, hard items solved, benchmark retired or gamed -- recurs across evaluation suites with a regularity that suggests a structural dynamic, not a series of independent accidents.

This paper formalizes that dynamic, documents its labor-market foundations, and draws out its governance implications. Section \ref{sec:model} presents a simple model of benchmark signal depreciation. The model shows why benchmark validity should be treated as an endogenous, depletable input into the precision of public capability signals, rather than as a stable property of an evaluation suite. Section \ref{sec:difficulty} interprets the model in terms of benchmark difficulty distributions, saturation, and contamination. Section \ref{sec:judgment} develops the human-judgment content of hard benchmark items and explains why this work does not scale with annotation headcount. Section \ref{sec:political_economy} analyzes the political economy of benchmark production: public-goods underinvestment, rising replacement costs, and capture incentives. Section \ref{sec:evidence} presents evidence from structured data labor markets on the scarcity premium associated with elite evaluative labor. Section \ref{sec:governance} develops the governance implications, including the distinction between item-level and procedural transparency and the relationship between benchmark validity and enterprise deployment readiness. Section \ref{sec:policy} sets out a policy and research agenda, and Section \ref{sec:conclusion} concludes.

\section{A Formal Model of Benchmark Signal Depreciation}
\label{sec:model}

This section formalizes the benchmark ceiling mechanism. The purpose of the model is not to capture all institutional details of benchmark construction, but to clarify the economic logic underlying the paper's central claim: benchmark scores are public signals of latent model quality, but the precision of those signals depends endogenously on benchmark validity. As models improve, fixed benchmarks depreciate as measurement instruments. Maintaining valid benchmark signal requires continual investment in hard-tail items, whose production depends on scarce elite evaluator judgment.

\subsection{Benchmark Scores as Signals of Latent Quality}

Consider a set of AI systems indexed by $i$. Each system has latent quality $q_i$, which may represent true capability, reliability, safety, or some composite measure of socially relevant model performance. The public does not observe $q_i$ directly. Instead, it observes a benchmark score $s_i$. A simple benchmark signal can be written as
\begin{equation}
    s_i = q_i + u_i,
\end{equation}
where $u_i$ is measurement error. The central feature of the framework is that this measurement error is not exogenous. It depends on benchmark validity, denoted $B_t$, at time $t$.

We decompose the error term as
\begin{equation}
    u_i(B_t) = \xi_i + \frac{\epsilon_i}{B_t},
\end{equation}
so that
\begin{equation}
    s_i = q_i + \xi_i + \frac{\epsilon_i}{B_t}.
    \label{eq:signal}
\end{equation}
Here, $\xi_i$ captures systematic benchmark bias: any predictable wedge between benchmark performance and latent model quality. This may arise from benchmark contamination, leakage, strategic benchmark optimization, benchmark-specific artifacts, or other forms of construct invalidity. The term $\epsilon_i$ captures pure measurement noise. Benchmark validity $B_t>0$ determines the precision of the benchmark signal. When $B_t$ is high, benchmark scores are informative about latent quality. When $B_t$ is low, benchmark scores become noisy and less informative.

Assuming $\epsilon_i$ has variance $\sigma^2_{\epsilon}$, is independent of $q_i$ and $\xi_i$, and treating $B_t$ as deterministic given the information set, the conditional variance of the benchmark signal is
\begin{equation}
    \mathrm{Var}(s_i \mid q_i) = \mathrm{Var}(\xi_i) + \frac{\sigma^2_{\epsilon}}{B_t^2}.
    \label{eq:variance}
\end{equation}
Equation \eqref{eq:variance} shows that, holding systematic bias fixed, a decline in benchmark validity increases the noise in observed scores.

\begin{proposition}[Benchmark Signal Depreciation]
For a fixed benchmark, if benchmark validity $B_t$ declines over time, then the precision of benchmark scores as signals of latent model quality declines.
\end{proposition}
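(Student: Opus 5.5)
The plan is to make ``precision'' precise and then read the result off equation \eqref{eq:variance}. The natural definition of the precision of the benchmark signal about latent quality is the reciprocal of its conditional variance,
\begin{equation*}
    \pi(B_t) \equiv \frac{1}{\mathrm{Var}(s_i \mid q_i)} = \left(\mathrm{Var}(\xi_i) + \frac{\sigma^2_{\epsilon}}{B_t^2}\right)^{-1}.
\end{equation*}
``For a fixed benchmark'' will be read as holding the bias distribution, and hence $\mathrm{Var}(\xi_i)$, and the noise variance $\sigma^2_\epsilon$ fixed over time. Only $B_t$ moves.

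The first step is monotonicity. The map $B \mapsto \sigma^2_\epsilon/B^2$ is strictly decreasing on $B>0$ whenever $\sigma^2_\epsilon>0$. Its derivative is $-2\sigma^2_\epsilon/B^3<0$. So if $B_{t'}<B_t$ for $t'>t$, then $\mathrm{Var}(s_i\mid q_i)$ is strictly larger at $t'$, and $\pi$ is strictly smaller. Equivalently,
\begin{equation*}
    \frac{d\pi}{dB} = \frac{2\sigma^2_\epsilon}{B^3}\,\pi^2 > 0.
\end{equation*}
For the weak version I also need to record the degenerate case $\sigma^2_\epsilon=0$. There precision is constant, so the claim holds weakly. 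I would state the proposition with the standing assumption $\sigma^2_\epsilon>0$ so that the decline is strict.

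The second step is to check that this notion of precision is the economically relevant one. I would add a short remark in the Gaussian case, with $q_i\sim N(\mu,\sigma_q^2)$ independent of $\xi_i$ and $\epsilon_i$, and with the mean of $\xi_i$ known. The posterior variance of $q_i$ given $s_i$ is then $\left(\sigma_q^{-2}+\pi(B_t)\right)^{-1}$. The signal-to-noise ratio, i.e.\ the squared correlation between $s_i$ and $q_i$, is $\sigma_q^2/(\sigma_q^2+1/\pi)$. Both are monotone in $\pi$, so a falling $B_t$ raises posterior uncertainty about latent quality and lowers the correlation between scores and quality.

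The main obstacle is definitional rather than technical. The argument depends on the ceteris paribus reading that $\mathrm{Var}(\xi_i)$ does not fall as $B_t$ falls. If contamination raised $B_t$-independent bias in a way correlated with $B_t$, the effect could compound but never offset. In any case I would simply impose it as part of what ``fixed benchmark'' means. Once that is in place, the rest is a one-line comparative static.
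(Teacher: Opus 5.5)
Your proof is correct and follows the paper's argument: both read the result directly off equation \eqref{eq:variance}, using that $\sigma^2_{\epsilon}/B_t^2$ is strictly decreasing in $B_t$ while $\mathrm{Var}(\xi_i)$ is held fixed. Your extra steps (defining precision as $1/\mathrm{Var}(s_i\mid q_i)$, noting that strictness requires $\sigma^2_{\epsilon}>0$, and the Gaussian posterior-variance remark linking precision to informativeness about $q_i$) make explicit what the paper's one-line proof leaves implicit.
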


\begin{proof}
From equation \eqref{eq:variance}, the variance of the noise component associated with pure measurement error is $\sigma^2_{\epsilon}/B_t^2$. This term is strictly decreasing in $B_t$. Therefore, as $B_t$ declines, $\mathrm{Var}(s_i\mid q_i)$ rises, reducing the informativeness of $s_i$ about $q_i$.
\end{proof}

\subsection{Benchmark Validity and Item-Level Discrimination}

Benchmark validity is determined by the extent to which benchmark items discriminate among frontier models. Let a benchmark at time $t$ consist of items $j=1,\ldots,J_t$. Each item has difficulty $\theta_j$, contamination or strategic optimization exposure $z_j\in[0,1]$, and governance weight $\omega_j$. Here, $z_j$ captures the extent to which an item has become predictable to model developers or models themselves, either because it appears in training data, because closely related variants are publicly available, or because developers have optimized systems against the benchmark's format, distribution, or scoring rule. Let $c_t$ denote frontier model capability at time $t$.

Benchmark validity is given by
\begin{equation}
    B_t = \sum_{j=1}^{J_t} \omega_j d_j(c_t,\theta_j,z_j),
    \label{eq:validity}
\end{equation}
where $d_j(\cdot)$ is the discriminating power of item $j$. A useful reduced-form specification is
\begin{equation}
    d_j(c_t,\theta_j,z_j)=\phi(\theta_j-c_t)(1-z_j),
    \label{eq:discrimination}
\end{equation}
where $\phi(\cdot)$ is a non-negative, single-peaked function attaining its maximum at zero. This means that an item is most discriminating when its difficulty is close to the frontier capability level, $\theta_j\approx c_t$. If $\theta_j\ll c_t$, the item is too easy and has been saturated by frontier models. If $\theta_j\gg c_t$, the item may be too difficult for all models and therefore fails to distinguish among them. If contamination or strategic optimization exposure $z_j$ is high, the item's discriminating power is reduced.

\begin{proposition}[Hard-Tail Concentration]
As frontier capability $c_t$ rises, the contribution of low-difficulty items to benchmark validity declines. For sufficiently advanced frontier models, the valid signal of a benchmark is concentrated in items whose difficulty lies near the frontier.
\end{proposition}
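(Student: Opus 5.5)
The argument works directly with the validity decomposition \eqref{eq:validity} and the reduced-form discrimination function \eqref{eq:discrimination}. We hold the item set, difficulties $\theta_j$, exposures $z_j$ and weights $\omega_j$ fixed, so the benchmark is a fixed benchmark. We treat $B_t$ as a function of $c_t$ alone, $B(c)=\sum_j \omega_j\phi(\theta_j-c)(1-z_j)$, and partition the items by difficulty relative to the frontier. The proof has two parts that match the two sentences of the statement: a monotonicity claim for easy items, and a share claim showing that near-frontier items dominate $B(c)$ once $c$ is large.

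For the first part, fix an item with $\theta_j<c$. Since $\phi$ is single-peaked with maximum at zero, it is nondecreasing on $(-\infty,0]$. As $c$ rises, the argument $\theta_j-c$ moves further left, so $\phi(\theta_j-c)$ weakly falls. Multiplying by the nonnegative factor $\omega_j(1-z_j)$ preserves this. Summing over any class of items with $\theta_j\le \bar\theta$, and restricting to $c\ge\bar\theta$, shows that the contribution of low-difficulty items, $B_{\text{low}}(c)=\sum_{\theta_j\le\bar\theta}\omega_j\phi(\theta_j-c)(1-z_j)$, is nonincreasing in $c$. If $\phi$ is strictly single-peaked, the decrease is strict whenever some such item has $\omega_j(1-z_j)>0$. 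If $z_j$ is allowed to rise over time through contamination, the decline is only reinforced.

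For the second part, we make ``concentration'' precise. Fix a bandwidth $\delta>0$ and define the near-frontier set $N_\delta(c)=\{j:|\theta_j-c|\le\delta\}$. We want to show that the share $B_{N_\delta}(c)/B(c)$ is large, or at least that the off-frontier contribution is small, for $c$ in the relevant range. This needs a mild extra assumption, which I expect to be the main obstacle, because the statement as written is not literally true for an arbitrary $\phi$. The assumption I would impose first is that $\phi(x)\to 0$ as $x\to-\infty$, meaning fully saturated items carry no information. Given that, for any $\eta>0$ there is an $M$ such that $\phi(x)<\eta$ for $x<-M$. Hence every item with $\theta_j<c-M$ contributes at most $\eta\,\omega_j$, and the total easy-item contribution is at most $\eta\sum_j\omega_j$, which can be made arbitrarily small. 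Too-hard items with $\theta_j>c+\delta$ are also non-discriminating by the analogous tail condition on the right. The remaining mass then sits within $[c-M,\,c+\delta]$, and by single-peakedness it is weighted most heavily near $\theta_j\approx c$.

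Two degenerate cases need explicit treatment. If every item is eventually saturated, so that $c$ exceeds all $\theta_j$ by more than $M$, then $B(c)\to 0$. The concentration claim should then be read as saying that whatever signal remains comes from the hardest items, the maximal $\theta_j$, which is exactly the ceiling phenomenon. I would state this as a corollary: as $c\to\infty$, $B(c)/\big(\omega_{j^*}\phi(\theta_{j^*}-c)(1-z_{j^*})\big)\to 1$ for the hardest item $j^*$. This requires that $\phi$ decays so that ratios $\phi(\theta_j-c)/\phi(\theta_{j^*}-c)\to 0$ for $\theta_j<\theta_{j^*}$, which holds, for example, under log-concavity with an exponential-type left tail such as a logistic item-information function. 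Choosing between a weak version, which only needs the vanishing-tail assumption, and this stronger ratio version is the key modelling decision, and I would present the weak version as the proposition's proof and the strong one as a remark.
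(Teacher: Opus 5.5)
Your first part reproduces the paper's entire proof. The paper notes that single-peakedness makes each $d_j$ maximal at $\theta_j=c_t$, and that raising $c_t$ pushes $\theta_j-c_t$ further below zero for a fixed easy item. It then reads the second sentence directly off this, with no formal notion of concentration. You are right that the second sentence needs more than single-peakedness, but your repair has a gap.

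The weak version bounds only the \emph{absolute} contribution of items more than $M(\eta)$ below the frontier, and that bound holds uniformly in $c$. For a fixed benchmark $B(c)\to 0$ as well, so in exactly the ``sufficiently advanced'' regime the bound says nothing about where the remaining signal sits. The step to the window $N_\delta$ also does not follow. The tail conditions control items beyond $c-M$ and $c+M'$, not beyond $c\pm\delta$. And ``weighted most heavily near $\theta_j\approx c$'' is a per-item fact, not a statement about shares.

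Vanishing tails cannot suffice. Take $\phi(x)=1/(1+x^2)$ and two equally weighted, unexposed items at $\theta=0$ and $\theta=1$. The ratio of their discriminating powers is $\bigl(1+(1-c)^2\bigr)/\bigl(1+c^2\bigr)$, which rises to $1$ as $c\to\infty$, so the easy item's relative share grows.

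Your strong remark also misidentifies its condition. For the logistic information function $\phi(x)=e^x/(1+e^x)^2$, one has $\phi(\theta_j-c)/\phi(\theta_{j^*}-c)\to e^{\theta_j-\theta_{j^*}}>0$. So your ratio $B(c)/\bigl(\omega_{j^*}\phi(\theta_{j^*}-c)(1-z_{j^*})\bigr)$ converges to a constant strictly above $1$ whenever another item has positive weight. An exponential left tail is exactly the borderline case. Full concentration on $j^*$ needs $(\log\phi)'(x)\to\infty$ as $x\to-\infty$, e.g.\ a Gaussian-shaped $\phi$.

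The assumption that does the work is log-concavity of $\phi$, which strengthens single-peakedness and covers the logistic case. For $\theta_k<\theta_l$,
\[
\frac{\partial}{\partial c}\log\frac{d_k(c)}{d_l(c)}=(\log\phi)'(\theta_l-c)-(\log\phi)'(\theta_k-c)\le 0,
\]
so every easier-to-harder ratio is nonincreasing in $c$. Cross-multiplying these pairwise inequalities and summing over pairs shows that, for any threshold $\bar\theta$, the share of $B(c)$ contributed by items with $\theta_j>\bar\theta$ is nondecreasing in $c$. Being a relative statement, this survives $B(c)\to 0$. It is a rigorous version of the paper's claim that remaining validity ``comes increasingly from higher-difficulty items close to the frontier.''
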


\begin{proof}
By equation \eqref{eq:discrimination}, the discriminating power of item $j$ is $\phi(\theta_j-c_t)(1-z_j)$. Since $\phi(\cdot)$ is single-peaked around zero, $d_j$ is maximized when $\theta_j=c_t$. For any fixed low-difficulty item $\theta_j$, an increase in $c_t$ moves $\theta_j-c_t$ farther below zero, reducing $\phi(\theta_j-c_t)$. Therefore, as $c_t$ rises, low-difficulty items contribute less to $B_t$, and the remaining benchmark validity comes increasingly from higher-difficulty items close to the frontier.
\end{proof}

\subsection{The Production of Hard-Tail Items}

Hard-tail benchmark items require scarce expert judgment. Let $h_j$ denote the quality of evaluator input used to construct item $j$. Item difficulty is produced according to
\begin{equation}
    \theta_j = g(h_j),
\end{equation}
where $g'(h_j)>0$. Higher-quality evaluator input can produce more difficult items.

The key scarcity assumption is that the cost of producing high-difficulty items rises convexly. Let $C(\theta)$ denote the minimum cost of producing an item of difficulty $\theta$. We assume
\begin{equation}
    C'(\theta)>0, \qquad C''(\theta)>0.
\end{equation}
The convexity of $C(\theta)$ captures the right-tailed scarcity of elite evaluators: ordinary expert labor may be available at moderate cost, but the evaluative labor capable of producing frontier-discriminating items is increasingly scarce, making it expensive and difficult to scale.

Since frontier-discriminating items require $\theta_j\approx c_t$, the cost of producing a valid frontier-discriminating item is approximately $C(c_t)$. Therefore, as frontier model capability $c_t$ rises, the replacement cost of maintaining benchmark validity rises. If $C''(\theta)>0$, this replacement cost rises convexly.

\begin{proposition}[Rising Replacement Cost]
Suppose frontier-discriminating benchmark items require $\theta_j\approx c_t$, and let $C(\theta)$ denote the minimum cost of producing an item of difficulty $\theta$. If $C'(\theta)>0$, then the cost of maintaining benchmark validity rises with frontier model capability. If $C''(\theta)>0$, this replacement cost rises convexly.
\end{proposition}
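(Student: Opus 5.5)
The plan is to reduce the statement to elementary properties of the composite map $c_t \mapsto C(c_t)$. Since frontier-discriminating items must satisfy $\theta_j \approx c_t$, I will define the per-item replacement cost as a function of capability, $R(c) := C(c)$. More generally, if maintaining a target validity level $\bar B$ requires a fixed number $N$ of items at or near the frontier, the maintenance cost is $N\,C(c)$. Here I take $N$ as independent of $c$, since by the Hard-Tail Concentration proposition only near-frontier items contribute appreciably to $B_t$, and because $\phi$ depends only on $\theta_j - c_t$, the discriminating power of an item placed exactly at the frontier is $\phi(0)(1-z_j)$ regardless of $c_t$. I will state this translation-invariance explicitly. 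It is what makes the number of required items capability-independent and lets the cost comparison run purely through $C$.

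First step: monotonicity. For $c' > c$, the mean value theorem gives $R(c') - R(c) = C'(\tilde c)(c'-c) > 0$ for some $\tilde c \in (c,c')$, using $C' > 0$. Equivalently, $R'(c) = C'(c) > 0$ when $C$ is differentiable, and multiplying by $N>0$ preserves the sign. Second step: convexity. Here $R''(c) = C''(c) > 0$, so $R$, and hence $N\,C(c)$, is strictly convex in $c$. Each successive increment of frontier capability therefore raises replacement cost by more than the previous one: for $\Delta>0$, $R(c+2\Delta)-R(c+\Delta) > R(c+\Delta)-R(c)$, which follows from the convexity inequality.

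The main obstacle is the word ``approximately'' in $\theta_j \approx c_t$. If items need only lie in a band $\theta_j \in [c_t-\delta, c_t+\delta]$, the minimal cost of a valid item is $\min_{\theta \in [c_t-\delta,c_t+\delta]} C(\theta) = C(c_t-\delta)$, because $C$ is increasing. This is again a translate of $C$, so both $C'>0$ and $C''>0$ carry over unchanged. I would handle the approximation this way rather than through a limiting argument, noting that it holds for any fixed tolerance $\delta$ that does not depend on $c_t$, which is the natural reading of the single-peaked, translation-invariant $\phi$ in equation \eqref{eq:discrimination}.
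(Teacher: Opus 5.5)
Your proof is correct and follows the paper's own reasoning. The paper gives no separate proof environment; the paragraph just before the proposition observes that a frontier-discriminating item costs approximately $C(c_t)$, so $C'>0$ and $C''>0$ carry over directly to replacement cost as a function of $c_t$. Your additions (the fixed item count $N$ justified by translation invariance of $\phi$, and reading $\theta_j\approx c_t$ as a fixed band so the cost becomes the translate $C(c_t-\delta)$) are careful elaborations of that same argument, not a different route; the fixed-$N$ step is an extra assumption the paper's per-item claim does not need.
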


\subsection{Socially Optimal Investment and Private Underinvestment}

Let $I_t$ denote investment in benchmark validity at time $t$. This investment may include expert recruitment, item design, validation, adversarial testing, contamination audits, and benchmark updating. Benchmark validity is produced according to
\begin{equation}
    B_t = B(I_t,c_t,z_t,H_t),
    \label{eq:production}
\end{equation}
where $z_t$ is aggregate contamination or gaming exposure and $H_t$ represents access to elite evaluator labor. We assume
\begin{equation}
    \frac{\partial B_t}{\partial I_t}>0, \qquad
    \frac{\partial B_t}{\partial H_t}>0, \qquad
    \frac{\partial B_t}{\partial z_t}<0.
\end{equation}

A social planner chooses investment to maximize the social value of benchmark validity net of investment cost:
\begin{equation}
    \max_{I_t} \quad V(B_t)-C(I_t),
    \label{eq:planner}
\end{equation}
where $V'(B_t)>0$, $V''(B_t)<0$, $C'(I_t)>0$, and $C''(I_t)>0$. The first-order condition is
\begin{equation}
    V'(B_t)\frac{\partial B_t}{\partial I_t}=C'(I_t).
    \label{eq:planner_foc}
\end{equation}

A private benchmark producer captures only a fraction of the social value created by benchmark validity. Let $\alpha\in(0,1)$ denote the appropriability parameter. A private producer solves
\begin{equation}
    \max_{I_t} \quad \alpha V(B_t)-C(I_t).
    \label{eq:private}
\end{equation}
The private first-order condition is
\begin{equation}
    \alpha V'(B_t)\frac{\partial B_t}{\partial I_t}=C'(I_t).
    \label{eq:private_foc}
\end{equation}
Since $\alpha<1$, the private marginal benefit of benchmark validity is below the social marginal benefit. Under standard concavity and convexity assumptions, the private investment level is below the social optimum.

\begin{proposition}[Public-Goods Underinvestment]
If benchmark validity generates social value that private benchmark producers cannot fully appropriate, then private investment in benchmark validity is below the socially optimal level.
\end{proposition}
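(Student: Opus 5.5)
The plan is a standard comparative-statics comparison of the two first-order conditions \eqref{eq:planner_foc} and \eqref{eq:private_foc}, with everything except investment held fixed. I will first suppress the other arguments and define the composite benefit function $W(I)=V(B(I,c_t,z_t,H_t))$, holding $c_t$, $z_t$ and $H_t$ fixed. I then write the planner objective as $W(I)-C(I)$ and the private objective as $\alpha W(I)-C(I)$. Let $I^{*}$ and $I^{p}$ denote the respective maximizers. To make ``standard concavity and convexity assumptions'' precise, I will impose one extra condition: $B$ is weakly concave in $I_t$. Then $W$ is concave, because $V$ is increasing and concave and $B$ is concave in $I_t$. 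Since $C$ is strictly convex, both objectives are strictly concave. I will also assume interior solutions, so that $I^{*}$ and $I^{p}$ are characterized by their first-order conditions.

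The key step is a single-crossing argument rather than direct algebra on the first-order conditions. Define the marginal net benefit $M(I;\alpha)=\alpha W'(I)-C'(I)$. This function is strictly decreasing in $I$, because $W'$ is nonincreasing and $C'$ is strictly increasing. It is also strictly increasing in $\alpha$, because $W'(I)=V'(B)\,\partial B/\partial I_t>0$. At $I^{*}$ the planner's condition gives $M(I^{*};1)=0$, so
\[
M(I^{*};\alpha)=(\alpha-1)W'(I^{*})<0.
\]
The private producer's marginal net benefit is therefore already negative at the social optimum. Since $M(\cdot;\alpha)$ is strictly decreasing and $M(I^{p};\alpha)=0$, it follows that $I^{p}<I^{*}$. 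The corresponding statement for validity follows as well: $B(I^{p})<B(I^{*})$, because $\partial B/\partial I_t>0$.

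The main obstacle is that the paper's assumptions do not by themselves guarantee concavity of $W$ or interior optima. Without concavity, the first-order conditions could have multiple roots, and comparing them would be inconclusive. My first fallback is a revealed-preference argument that avoids concavity entirely. Optimality of each choice gives
\[
W(I^{*})-C(I^{*})\ge W(I^{p})-C(I^{p}),\qquad \alpha W(I^{p})-C(I^{p})\ge \alpha W(I^{*})-C(I^{*}).
\]
Adding the two inequalities yields $(1-\alpha)\bigl(W(I^{*})-W(I^{p})\bigr)\ge 0$, so $W(I^{*})\ge W(I^{p})$. Because $W$ is increasing in $I$, this gives $I^{*}\ge I^{p}$ without any concavity. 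Strictness then comes from the first-order conditions: if $I^{p}=I^{*}$ at an interior point, both conditions would hold simultaneously, which would require $\alpha W'(I^{*})=W'(I^{*})$ and hence contradict $\alpha<1$ together with $W'>0$. I will present this monotone-comparative-statics version as the main proof, and keep the single-crossing argument as the intuition.
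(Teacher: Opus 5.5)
Your proof is correct, but your main argument takes a different route from the paper's. The paper only compares the two first-order conditions. Since $\alpha<1$, the private marginal benefit $\alpha V'(B_t)\,\partial B_t/\partial I_t$ lies strictly below the social one at every $I_t$, and the paper then invokes $V''<0$ and $C''>0$ to conclude that the private root lies to the left of the social root. That is your single-crossing argument, and you are right that it needs more than the paper states. Because $W''=V''\,(\partial B_t/\partial I_t)^2+V'\,\partial^2 B_t/\partial I_t^2$, the condition $V''<0$ alone does not make $V\circ B$ concave in $I_t$, so some curvature condition on $B$ (your weak concavity) is needed for the unique-crossing step. Your revealed-preference argument uses the same key fact: $\alpha W(I)-C(I)$ has increasing differences in $(I,\alpha)$, since its cross-partial is $W'(I)>0$. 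But it turns that fact into a comparison of optima without any concavity, as a standard Topkis-type monotone comparative statics argument. What this buys is generality. The inequality $I^{p}\le I^{*}$ holds for every selection from the two argmax sets, so nonconvexities and multiple optima do no harm. In addition, $B(I^{p})\le B(I^{*})$ follows from $V'>0$ alone, even before monotonicity of $B$ in $I_t$ is used. What it costs is that strictness must be recovered separately. There both routes need what you state explicitly: the maximizers exist, and the common candidate point is an interior, differentiable optimum. At a corner such as $I^{p}=I^{*}=0$, the strict inequality claimed in the proposition can fail.
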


\begin{proof}
Compare equations \eqref{eq:planner_foc} and \eqref{eq:private_foc}. Since $\alpha<1$, the private marginal benefit of investment is strictly below the social marginal benefit for any given $I_t$. With $V''(B_t)<0$ and $C''(I_t)>0$, this implies a lower privately chosen investment level than the social planner's choice.
\end{proof}

The rest of the paper develops the substantive content of these propositions. Proposition 1 provides the measurement logic: benchmark scores become less informative as benchmark validity declines. Proposition 2 gives the saturation logic: as model capability rises, signal concentrates in items near the frontier. Proposition 3 gives the labor-market logic: the replacement cost of valid frontier items rises because such items require scarce expert judgment. Proposition 4 gives the political-economy logic: even when valid benchmarks have high social value, private producers have insufficient incentives to supply them at the socially efficient level.

\section{Benchmark Difficulty, Saturation, and Contamination}
\label{sec:difficulty}

\subsection{How Benchmark Items Are Distributed}

A benchmark can be understood as a sample from a latent difficulty distribution: the distribution of all possible questions, tasks, or evaluation items in a given domain, ranked by the probability that a competent human expert would answer them correctly on first attempt. In practice, benchmark designers do not sample randomly from this distribution. They sample purposively, typically selecting items that span a range of difficulty, are clearly answerable, and have verifiable ground truth. The result is a constructed difficulty distribution that approximates the upper portion of the latent one, with items ranging from accessible to genuinely hard.

Two decades of empirical work on human performance measurement have established that difficulty distributions in structured evaluation contexts follow approximately log-normal patterns, with a long right tail corresponding to items that only a small fraction of expert respondents can answer correctly \citep{lord2008statistical, embretson2025item}. AI benchmark performance data increasingly confirms a similar pattern: a large proportion of items are solved quickly and by most capable models, while a small proportion of items, the hard tail, remains unsolved for extended periods and provides much of the discriminating power among frontier systems \citep{raji2021ai,kiela2021dynabench}.

This is the substantive content of Proposition 2. The governance value of a benchmark, its capacity to distinguish meaningfully between competing systems, does not reside evenly across its items. It depends disproportionately on the subset of items whose difficulty remains close to the frontier. Items far below the frontier cease to discriminate. Items far above the frontier may also fail to discriminate if all systems fail them. The signal-bearing portion of the benchmark is therefore local to the moving frontier.

\subsection{Saturation and the Shrinking Signal}

Model improvement, driven by scaling compute and training data, has a predictable effect on benchmark difficulty distributions: it erodes the left side of the distribution first and fastest. Easy items, those solvable by pattern-matching, memorization, or broad generalization from training data, are saturated early. As a benchmark ages, the proportion of items providing discriminating signal shrinks, while the proportion providing no useful information grows. This pattern is illustrated by the evolution from GLUE to SuperGLUE and by later benchmark suites such as MMLU, HellaSwag, and Big-Bench \citep{wang2018glue,wang2019superglue,zellers2019hellaswag,hendrycks2021mmlu,srivastava2023beyond}.

The saturation timeline varies by domain and task type, but the pattern is consistent: benchmarks designed to challenge the best systems of their era become progressively less informative as frontier systems improve. \citet{srivastava2023beyond} document this explicitly in motivating BIG-Bench, noting that SuperGLUE reached superhuman performance within eighteen months of its introduction. This creates a benchmark treadmill. Evaluation infrastructure is consumed by the very progress it is designed to measure, and must be continuously replenished.

In the model, this is the movement of $c_t$ relative to a fixed item distribution. As $c_t$ rises, low-difficulty items contribute less to $B_t$, and the precision of benchmark scores falls through equation \eqref{eq:variance}. A high benchmark score can therefore become less meaningful over time, not because the score is incorrectly computed, but because the underlying benchmark validity that gives the score informational content has depreciated.

The replacement problem is asymmetric. Producing new easy-to-medium difficulty items is relatively straightforward; there is a large population of evaluators capable of authoring and validating them, and annotation platforms can deploy these at scale. Producing genuinely hard items requires evaluators who can operate at or near the frontier of human capability in the relevant domain, reason about what kinds of problems will remain difficult for systems that have already mastered the majority of the existing suite, and construct items with verifiable ground truth in areas where ground truth itself may be contested. This is not a task that scales linearly with headcount.

\subsection{Benchmark Contamination and the Integrity Problem}

Saturation is accelerated by a related but distinct problem: benchmark contamination. When training data includes items from evaluation suites, either deliberately or through the undiscriminated ingestion of web content, model performance on those suites reflects memorization as much as generalization \cite{jacovi2023stop, golchin2024time}. The result is performance inflation: reported scores systematically overstate genuine capability, and the apparent rate of progress on the benchmark diverges from actual progress on the underlying skill. Contamination is now recognized as a pervasive problem across evaluation suites, with documented cases in GPT-4, Llama 2, and other leading systems \citep{sainz2024data}.

Contamination has a differential effect on the difficulty distribution. Easy items are more likely to appear verbatim in training data; hard items, authored by specialists in controlled settings and less widely distributed, are less likely to be contaminated. This means that contamination further concentrates whatever valid signal remains in the hard tail. In equation \eqref{eq:discrimination}, contamination enters through $z_j$: higher exposure lowers item-level discrimination even when the item would otherwise be informative. For a contaminated, partially saturated benchmark, much of the valid governance signal resides in the small number of items that remain difficult, uncontaminated, and close to the frontier.

The important point is not that benchmark scores are useless. It is that their informational content is endogenous to the age, exposure, difficulty composition, and governance of the benchmark. The more a benchmark is saturated or contaminated, the less its aggregate score can be interpreted as a precise public signal of latent model quality.

\section{The Human Judgment Content of Hard Benchmark Items}
\label{sec:judgment}

\subsection{What Makes an Item Hard}

Hard benchmark items share a set of structural features that distinguish them from easy ones, and those features track directly to the kind of human expertise required to produce them. Building on Item Response Theory \citep{lord2008statistical} and more recent work on adversarial evaluation design \citep{nie2020adversarial, bartolo2020beat}, we identify four characteristics that define the hard tail of benchmark distributions.

First, hard items require genuine domain knowledge that cannot be inferred from surface features or broad statistical patterns. A medical reasoning item that asks a model to synthesize conflicting evidence from multiple diagnostic modalities, weigh prior probabilities against presenting symptoms, and identify an atypical diagnosis requires the evaluator to have internalized a clinical reasoning process that takes years of training to develop. This is qualitatively different from a general knowledge question that can be answered by a competent reader with access to background information.

Second, hard items involve low codifiability: the relevant knowledge cannot be fully articulated in explicit rules, protocols, or lookup procedures. It resides in the judgment capacity of practitioners who have developed the ability to navigate exceptions, edge cases, and genuinely novel situations. Legal reasoning at the frontier, assessing how an established doctrine applies to a fact pattern with no direct precedent, is the paradigmatic example. The answer is not in a database; it is in the evaluator's trained capacity for legal imagination.

Third, hard items have ground truth that is itself contested or requires expert judgment to establish. In frontier domains, there may be genuine disagreement among experts about the correct answer, and the process of establishing defensible ground truth requires deliberation among multiple credentialed evaluators. This is precisely the kind of task that benefits from the 1\% of evaluators: those who not only have domain expertise but also have the meta-cognitive capacity to reason about the limits and contested boundaries of their own knowledge.

Fourth, hard items are resistant to adversarial pattern-matching. They are designed to distinguish genuine understanding from sophisticated mimicry. This requires evaluators who can anticipate the failure modes of capable models and construct items specifically designed to probe those failures. This is a second-order task, designing tests for systems one understands well, that requires both deep domain knowledge and a working model of the systems being evaluated.

\subsection{The 1\% Structure of Evaluator Supply}

The population of evaluators capable of producing items with these four characteristics is thin and right-tailed. In any given professional domain, the distribution of evaluator capability is approximately log-normal, with a small proportion of practitioners who combine deep domain expertise, the capacity for low-codifiability judgment, familiarity with AI system capabilities and failure modes, and the ability to construct valid evaluation items. We refer to this stratum as the evaluative 1\%, borrowing the framing from a companion theoretical paper \citep{esposito2026no} but shifting the emphasis from labor compensation to epistemic capacity.

The evaluative 1\% is not simply the most credentialed slice of a professional population. A highly credentialed professional who lacks familiarity with AI system capabilities will not be effective at constructing items that probe those capabilities at the frontier. A practitioner who is expert at one-off professional service but has not developed the meta-cognitive capacity to reason about evaluation design may produce items that are either too easy or poorly grounded. The conjunction of requirements -- domain expertise, low-codifiability judgment, familiarity with AI failure modes, and evaluation design skill -- is rare, and its rarity is not a contingent market friction but a structural feature of how expertise develops.

This is the substantive interpretation of Proposition 3. The convex cost function $C(\theta)$ is not merely a mathematical convenience. It captures the fact that increasingly difficult benchmark items require increasingly scarce forms of evaluator judgment. Deploying ten thousand less expert evaluators does not produce the equivalent of one hundred highly expert ones; it produces a large volume of items that cluster in the easy-to-medium range of the difficulty distribution, precisely the range that provides little discriminating signal for frontier systems. The marginal product of additional evaluators beyond some threshold of expertise is therefore low for the purpose of producing hard-tail benchmark items.

This scarcity does not imply that ordinary annotation or evaluation work is unimportant. It implies that different parts of the evaluation production function have different substitution possibilities. For routine labeling and broad coverage, scale matters. For frontier-discriminating benchmark construction, the binding constraint is high-quality judgment at the right tail of the expertise distribution.

\subsection{The Infinite Mile as Structural Condition}

The concept of the ``infinite mile,'' the irreducible last increment of human judgment that cannot be automated away, has been developed theoretically as an argument for the persistent structural importance of human data labor in AI development \citep{esposito2026no}. In the context of benchmark construction, however, the infinite mile is not a philosophical claim about the ultimate limits of AI. It is a specific, empirical observation about the current structure of evaluation. As long as frontier AI systems are assessed against benchmarks, and as long as benchmarks derive their discriminating power from hard items authored by elite human evaluators, the demand for that evaluative stratum is a structural feature of the measurement apparatus on which AI governance depends.

This has a practical implication for workforce planning and investment that has not been recognized in standard AI labor market accounts. The evaluators who produce hard benchmark items are not legacy workers being displaced by automation, but structural inputs to the governance of the technology that is displacing others. Their scarcity is not a problem to be solved by scaling annotation platforms; it is a constraint that must be built into any serious account of sustainable AI development and governance infrastructure.

\section{The Political Economy of Benchmark Production}
\label{sec:political_economy}

\subsection{The Public Goods Problem in Frontier Benchmark Production}

The economic character of a well-designed frontier benchmark is that of a public good in the technical sense: it is non-excludable and non-rival once used as a public standard. One research group's use of a benchmark does not prevent another group from using it, and the social value of a benchmark extends beyond the organization that paid to produce it. This means that private incentives to invest in benchmark production are systematically lower than social incentives, and that, in the absence of public financing or coordinated industry investment, frontier benchmark production will be undersupplied relative to its social value \citep{samuelson1954pure, arrow1962economic}.

The undersupply problem is compounded by the asymmetry between who bears the cost of benchmark production and who captures its benefits. The organizations that invest in constructing frontier evaluation suites bear the cost of expert evaluator time, validation protocols, contamination testing, and infrastructure. The benefits accrue to the research community, to regulators who use benchmark results to assess safety and capability, to enterprises that rely on benchmark claims in procurement and deployment decisions, and to the public. This mismatch is the standard structure of a public-goods underinvestment problem \citep{ostrom1990governing, tirole1988theory}, and is formalized in Proposition 4 by the appropriability parameter $\alpha<1$.

The public-goods framing also clarifies why the problem worsens as frontier capability rises. The social value of valid evaluation increases with the stakes of AI deployment, but the cost of valid evaluation also rises because discriminating signal must be continually renewed at the moving frontier. In terms of the model, the required replacement item difficulty rises with $c_t$, while the private return to investment remains only a fraction of the social value generated by a more informative signal.

\subsection{Benchmark Depreciation and the Replacement Cost Problem}

A further economic dimension of the benchmark problem concerns the depreciation rate of evaluation infrastructure. As established in Section \ref{sec:difficulty}, benchmarks are depleted by model improvement and contamination. This means that the social investment required to maintain a functioning evaluation ecosystem is not a one-time cost but an ongoing expenditure, calibrated to the rate of model improvement. Proposition 3 formalizes the rising replacement cost: the cost of producing frontier-discriminating items is approximately $C(c_t)$, and if $C''(\theta)>0$, this replacement cost rises convexly with frontier capability.

Against this rising demand, the supply of elite evaluators does not scale proportionally with investment. Their formation requires years of professional practice in deep domains, and cannot be shortcut by financial incentives alone. The result is a structural tension: the faster AI systems improve, the more urgently new frontier benchmarks are needed, and the more scarce the evaluators capable of producing them become relative to demand. This is the political economy of the benchmark ceiling: a tightening constraint that worsens precisely as the governance stakes of reliable evaluation increase.

\subsection{Benchmark Control as Market Power}

The public-goods framing understates one important dimension of the benchmark production problem: the strategic incentive to control evaluation. Organizations that develop and maintain evaluation suites exercise significant influence over the narrative of AI progress. A benchmark that tests capabilities where one organization's systems perform well, and ignores capabilities where they perform poorly, will generate a systematically favorable portrait of that organization's standing. This is not a hypothetical concern; it is the normal condition of evaluation design in any competitive market where the producers of the evaluated product also control, fund, or strongly influence the evaluation instrument \citep{raji2022outsider,liao2024ai}.

The result is a dual market failure: private incentives both underproduce genuinely independent frontier benchmarks and overproduce strategically designed benchmarks that flatter their producers. The net effect is an evaluation ecosystem that is simultaneously starved of genuine discriminating power and oversupplied with instruments designed to generate favorable narratives. This is not a sustainable foundation for AI governance.

A light extension of the model makes this capture channel explicit. Suppose firm $i$ can exert influence effort $\ell_i$ that changes benchmark weights according to
\begin{equation}
    \omega_j = \omega_j^0 + \lambda \ell_i a_{ij},
    \label{eq:capture}
\end{equation}
where $\lambda\geq 0$ captures the susceptibility of the benchmark design process to influence, and $a_{ij}$ indicates whether item $j$ favors firm $i$'s comparative advantage. Firm $i$ receives reward $R(s_i)$ from benchmark performance and pays cost $\kappa(\ell_i)$ for influence effort. The firm's payoff can be written as
\begin{equation}
    \Pi_i = R(s_i) - C(e_{q,i},e_{b,i}) - \kappa(\ell_i),
    \label{eq:firm_payoff}
\end{equation}
where $e_{q,i}$ denotes investment in genuine quality and $e_{b,i}$ denotes benchmark-specific optimization.

\begin{proposition}[Capture Incentives]
When benchmark scores affect market access, regulation, or reputation, firms have incentives to influence benchmark design toward dimensions where they have comparative advantage. These incentives are increasing in the stakes attached to benchmark scores and in the susceptibility of benchmark design to private influence.
\end{proposition}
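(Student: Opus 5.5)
The plan is to treat the proposition as a comparative-statics result on the firm's choice of influence effort $\ell_i$ in \eqref{eq:firm_payoff}. First I make the dependence of $s_i$ on $\ell_i$ explicit. By \eqref{eq:signal}, $s_i=q_i+\xi_i+\epsilon_i/B_t$. I take the benchmark score to be the weighted aggregate of item-level performance, $s_i=\sum_j \omega_j p_{ij}$, where $p_{ij}$ is firm $i$'s performance on item $j$; equivalently, the systematic wedge $\xi_i$ absorbs the reweighting. Substituting \eqref{eq:capture} gives
\begin{equation*}
\frac{\partial s_i}{\partial \ell_i}=\lambda\sum_{j} a_{ij}\,p_{ij}\equiv \lambda A_i ,
\end{equation*}
where $A_i$ is the firm's performance on items favoring its comparative advantage, net of the implied normalization of weights. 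The key modeling assumption, which I will state explicitly, is that $A_i>0$. Comparative advantage means the firm does relatively better on the items with $a_{ij}=1$, so shifting weight toward them raises its score. If weights are renormalized to sum to one, $A_i$ becomes the firm's performance on favored items minus its average performance, which is positive by the definition of comparative advantage.

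Second, I parameterize the stakes. I write $R(s)=\rho\, r(s)$, with $\rho>0$ indexing how strongly market access, regulation, or reputation respond to scores, and $r'>0$. I assume $\kappa'>0$, $\kappa''>0$, $\kappa'(0)=0$, and that $R$ is weakly concave. The first-order condition for $\ell_i$ is then
\begin{equation*}
\rho\, r'(s_i)\,\lambda A_i=\kappa'(\ell_i).
\end{equation*}
At $\ell_i=0$ the left side is strictly positive whenever $\rho\lambda A_i>0$, while $\kappa'(0)=0$. So the optimal $\ell_i^*$ is strictly positive, which establishes the existence of the incentive. The terms $C(e_{q,i},e_{b,i})$ are additively separable from $\ell_i$. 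I hold them fixed, or note that the envelope theorem handles them at the joint optimum.

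Third, I obtain monotonicity by implicit differentiation of the first-order condition. Let $\Pi_{\ell\ell}=\rho r''(s_i)(\lambda A_i)^2-\kappa''(\ell_i)<0$, which follows from the curvature assumptions. Then
\begin{equation*}
\frac{\partial \ell_i^*}{\partial \rho}=-\frac{r'(s_i)\lambda A_i}{\Pi_{\ell\ell}}>0,
\end{equation*}
and similarly $\partial\ell_i^*/\partial\lambda=-\rho A_i\left[r'(s_i)+r''(s_i)\lambda A_i\ell_i\right]/\Pi_{\ell\ell}$. A cleaner route is to observe that $\Pi_i$ has increasing differences in $(\ell_i,\rho)$ and in $(\ell_i,\lambda)$, since $\partial^2\Pi_i/\partial\ell_i\partial\rho=r'\lambda A_i>0$. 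Topkis's theorem then delivers monotone comparative statics without signing every term.

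The main obstacle is the $\lambda$ case. The cross-partial is $\rho A_i\left[r'(s_i)+r''(s_i)\lambda A_i\ell_i\right]$, and strong concavity of $R$ could make it negative. I will first try assuming $R$ linear, or requiring the elasticity $-s\,r''/r'$ to be bounded below one along the relevant range; this keeps the bracket positive. Failing that, I will state the result as holding for $\lambda$ near zero, or for linear rewards, which matches the proposition's qualitative claim.
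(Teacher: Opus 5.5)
Your argument is correct under the assumptions you add, and it takes a more formal route than the paper. The paper's proof is a one-line observation about marginal benefit. Influence raises the weight on favored items at rate $\lambda a_{ij}$, so the marginal return to $\ell_i$ is $R'(s_i)$ times a term proportional to $\lambda$, and the paper concludes that the ``private return to influence'' increases in $R'(s_i)$ and $\lambda$. It never says how $s_i$ depends on the weights, never solves for $\ell_i^*$, and implicitly holds $R'(s_i)$ fixed as $\lambda$ varies. You supply the missing link $s_i=\sum_j\omega_j p_{ij}$ with $A_i>0$, get $\ell_i^*>0$ from $\kappa'(0)=0$, and do genuine comparative statics on the chosen effort. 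That gives an actual optimization result. It also exposes something the paper's proof skips: with concave $R$, $\lambda$ feeds back into $s_i$, so the $(\ell_i,\lambda)$ cross-partial $\rho A_i[r'+r''\lambda A_i\ell_i]$ has ambiguous sign. The cost is extra structure, and two points need tidying. First, delete the sentence asserting increasing differences in $(\ell_i,\lambda)$, since you go on to show it can fail. Your elasticity bound also controls the bracket only if $\lambda A_i\ell_i\le s_i$, i.e.\ if the baseline score is nonnegative. Second, you can drop any restriction on $R$ by measuring the incentive the way the paper does, through returns rather than effort. By the capture equation, $s_i$ depends on $(\lambda,\ell_i)$ only through $x=\lambda\ell_i$, the realized weight distortion. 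So the firm maximizes $\rho r(s_i^0+A_ix)-\kappa(x/\lambda)$, in which $\lambda$ enters only through the cost. Since $\partial^2[-\kappa(x/\lambda)]/\partial x\,\partial\lambda=\kappa'(x/\lambda)/\lambda^2+x\kappa''(x/\lambda)/\lambda^3\ge0$, Topkis gives $x^*$ nondecreasing in $\lambda$. Alternatively, by the envelope theorem the value of influence $\Pi_i^*-\Pi_i|_{\ell_i=0}$ has derivative $r(s_i^*)-r(s_i^0)\ge0$ in $\rho$ and $\rho r'(s_i^*)A_i\ell_i^*\ge0$ in $\lambda$. Effort $\ell_i^*$ itself can fall with $\lambda$ when $R$ is very concave, so these are the right objects for an unconditional version of the claim.
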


\begin{proof}
From equations \eqref{eq:capture} and \eqref{eq:firm_payoff}, influence effort $\ell_i$ raises the benchmark weight on items that favor firm $i$ when $\lambda a_{ij}>0$. The marginal benefit of influence is increasing in $R'(s_i)$ and $\lambda$. Thus, as the stakes attached to benchmark performance rise, or as benchmark design becomes more susceptible to influence, the private return to influence effort increases.
\end{proof}

The weight formulation should be read broadly. Explicit weight distortion is one case, but the more empirically salient form of capture may be item selection. When weights are effectively binary -- with some items included and others excluded -- the same framework nests pure item selection as a special case. A benchmark producer need not alter numerical weights after the fact in order to shape the signal. It can shape the signal by deciding which capabilities enter the evaluation suite, which domains are emphasized, which failure modes are excluded, and when saturated or unfavorable items are retired.

\section{Evidence from Structured Data Labor Markets: The Scarcity Premium}
\label{sec:evidence}

\subsection{The micro1 Dataset}

To ground the theoretical argument in empirical evidence, we draw on platform data from micro1, a Silicon Valley-based AI talent platform that recruits, vets, and deploys credentialed professionals for structured AI training and evaluation work. The dataset covers task-level observations for over 1,000 credentialed professionals in legal, medical, engineering, and financial domains, including task type, hourly rate, quality score, and worker credentials \citep{esposito2026structured}. The dataset provides an unusually clean window into the economics of expert evaluator labor, because it prices expertise directly against verifiable output rather than through the institutional intermediaries -- firm prestige, geographic wage norms, salary bands -- that obscure the marginal product of expert judgment in traditional professional markets.

For the purposes of this paper, the key observation from the micro1 data is the structure of the wage premium distribution. Across all four domains, structured task workers earn above their reported primary role compensation on an hourly basis. The mean premium is approximately 15 to 25 percent, but this average obscures the distributional structure that matters for our argument. The premium distribution is right-skewed: workers in the upper quartile of task difficulty and credential scarcity earn premiums of 35 to 50 percent, while workers in commodity task families earn premiums near zero. The Gini coefficient of premium magnitudes within domains is approximately 0.38, indicating substantial inequality in the value attributed to different kinds of expert labor within a single professional domain.

\subsection{Task Difficulty and the Premium Gradient}

The premium gradient by task type provides direct evidence for the scarcity argument. We classify tasks on the micro1 platform into four difficulty quartiles using a composite score of item difficulty rating, completion time, and quality score variance. Workers in the top difficulty quartile, tasks corresponding to the hard tail of benchmark distributions, earn a premium approximately 28 percentage points higher than workers in the bottom quartile, after controlling for domain, credentials, and experience. This gradient is not explained by differences in cognitive effort alone: when we include task completion time as a control, the difficulty quartile premium remains large and statistically significant, suggesting that the hard-tail premium reflects genuine scarcity of the relevant judgment capacity, not merely the greater effort required to complete difficult tasks.

The domain pattern reinforces the interpretation. In the legal domain, the premium is concentrated in tasks requiring statutory interpretation and novel precedent analysis, tasks with low codifiability and no lookup procedure. In the medical domain, the premium concentrates in clinical reasoning tasks involving atypical presentations and multi-modal evidence synthesis. In engineering, it concentrates in architecture review and frontier-domain evaluation. In each case, the high-premium tasks share the four structural features identified in Section \ref{sec:judgment}: deep domain knowledge, low codifiability, contested or complex ground truth, and resistance to pattern-matching. These are, in other words, the tasks most likely to produce hard benchmark items.

\subsection{The Inverse Kuznets Pattern and Its Implications}

A secondary but important finding from the micro1 data is the pattern of within-profession wage dispersion in structured versus unstructured markets. In traditional professional service markets, compensation is mediated by institutional affiliation: the prestige of the firm, the geographic wage norm, the opacity of the salary band. These institutional premia compress observed variation in compensation relative to underlying variation in expert judgment capacity. In structured task markets, where compensation is priced directly against verifiable output, institutional premia largely disappear and the observed wage distribution better approximates the true distribution of marginal products.

The result is a within-profession wage distribution in structured markets that is simultaneously more compressed at the bottom, because commodity tasks converge on a global platform rate, and more dispersed at the top, because the scarcity of elite evaluators is priced directly. This pattern is the opposite of what standard skill-biased technological change models predict \citep{acemoglu2011skills, autor2003skill}. Rather than a monotonic increase in wage inequality with technological change, structured data markets exhibit a bifurcation: compression at the base, a scarcity premium at the frontier.

This evidence does not, by itself, identify the full social value of hard-tail benchmark production. It does, however, support the central modeling assumption: the labor input required to generate valid frontier signal is right-tailed, scarce, and differentially priced. The empirical premium gradient is the market trace of the convex replacement-cost function in Proposition 3.

\section{Governance Implications: Who Controls the Benchmark Controls the Narrative}
\label{sec:governance}

\subsection{Epistemic Power and the Benchmark Producer}

The concentrated power that flows from benchmark control is epistemic rather than material in the first instance, but it has direct material consequences. The organization that designs and maintains the evaluation suite used to compare AI systems determines which capabilities are measured, how they are weighted, how difficulty is calibrated, and when the benchmark is retired or superseded. Each of these decisions shapes the narrative of AI progress in ways that are consequential for investment, regulation, and public understanding.

The current landscape of AI evaluation is concentrated in ways that reflect this power asymmetry. A small number of academic laboratories, industry consortia, and individual organizations account for the majority of frontier evaluation suites used in serious capability comparison. Most of these entities have some form of relationship -- employment, funding, or access -- with the AI systems they evaluate, creating conditions for the capture problem described in Section \ref{sec:political_economy}. The public has no reliable way to verify that reported benchmark performance reflects genuine capability rather than strategic evaluation design \citep{liang2022holistic}.

\subsection{Regulatory Dependence on Benchmark Validity}

The governance problem is not merely academic. Regulatory frameworks for AI increasingly rely on benchmark-based capability assessments as a basis for compliance determinations, risk categorization, and deployment authorization. The EU AI Act's safety evaluation system, the now-revoked 2023 US Executive Order on AI's provisions for safety evaluations, and the UK AI Safety Institute's evaluation regime all depend, at some level, on benchmark results being valid measures of underlying capabilities \citep{eu2025ai, eo14110, uk2024safety}.

If benchmark saturation, contamination, and strategic design are pervasive, then regulatory frameworks built on benchmark compliance are built on an unreliable foundation. A system that achieves regulatory approval by scoring well on a saturated or strategically designed evaluation suite may have qualitatively different capabilities, and risks, than the benchmark result implies. The benchmark ceiling problem is, in this sense, a regulatory integrity problem: it undermines the evidentiary basis of governance decisions that have real consequences for safety, competition, and public trust.

In terms of Proposition 1, the problem is a decline in signal precision. Regulatory use of benchmarks implicitly treats $s_i$ as informative about $q_i$. But when $B_t$ has depreciated, equation \eqref{eq:variance} implies that the same observed score carries less information about latent quality. Regulatory dependence on benchmark scores therefore requires regulatory attention to benchmark validity itself.

\subsection{Governance Design: Protected Items and Procedural Transparency}

The model can also be used to study how benchmark governance choices affect the evolution of benchmark validity. Let benchmark validity evolve according to
\begin{equation}
    B_{t+1} = [1-\delta(c_t,z_t)]B_t + A(H_t,I_t,a_t),
    \label{eq:dynamic_validity}
\end{equation}
where $\delta(\cdot)$ is the depreciation rate of benchmark validity and $A(\cdot)$ is the renewal function. The key parameters are frontier model capability $c_t$, contamination or strategic optimization exposure $z_t$, institutional accountability $a_t$, access to elite evaluator labor $H_t$, and benchmark renewal investment $I_t$.

Contamination and strategic optimization exposure is increasing in item-level transparency $\tau_I$:
\begin{equation}
    z_t = z(\tau_I), \qquad \frac{\partial z_t}{\partial \tau_I}>0.
\end{equation}
Accountability is produced by procedural transparency and institutional independence:
\begin{equation}
    a_t = a(\tau_G,\chi),
\end{equation}
where $\tau_G$ denotes procedural transparency over governance, methodology, evaluator credentials, contamination testing, and saturation status, and $\chi$ denotes institutional independence from evaluated firms. We assume
\begin{equation}
    \frac{\partial a}{\partial \tau_G}>0, \qquad \frac{\partial a}{\partial \chi}>0.
\end{equation}
Benchmark depreciation rises with frontier capability and exposure:
\begin{equation}
    \frac{\partial \delta}{\partial c_t}>0, \qquad \frac{\partial \delta}{\partial z_t}>0.
\end{equation}
Benchmark renewal rises with access to elite evaluator labor, investment, and accountability:
\begin{equation}
    \frac{\partial A}{\partial H_t}>0, \qquad
    \frac{\partial A}{\partial I_t}>0, \qquad
    \frac{\partial A}{\partial a_t}>0.
\end{equation}

These assumptions imply several comparative statics. First, frontier model improvement reduces benchmark validity by accelerating depreciation:
\begin{equation}
    \frac{\partial B_{t+1}}{\partial c_t} = -\frac{\partial \delta}{\partial c_t}B_t < 0.
\end{equation}
Second, item-level transparency can reduce benchmark validity by increasing contamination or strategic optimization exposure:
\begin{equation}
    \frac{\partial B_{t+1}}{\partial \tau_I}
    = -\frac{\partial \delta}{\partial z_t}\frac{\partial z_t}{\partial \tau_I}B_t < 0.
\end{equation}
Third, procedural transparency increases benchmark validity through accountability:
\begin{equation}
    \frac{\partial B_{t+1}}{\partial \tau_G}
    = \frac{\partial A}{\partial a_t}\frac{\partial a_t}{\partial \tau_G}>0.
\end{equation}
Fourth, institutional independence increases benchmark validity through the same accountability channel:
\begin{equation}
    \frac{\partial B_{t+1}}{\partial \chi}
    = \frac{\partial A}{\partial a_t}\frac{\partial a_t}{\partial \chi}>0.
\end{equation}
Finally, access to elite evaluators and investment in renewal directly increase validity:
\begin{equation}
    \frac{\partial B_{t+1}}{\partial H_t}>0, \qquad
    \frac{\partial B_{t+1}}{\partial I_t}>0.
\end{equation}

These comparative statics clarify the governance tradeoff. Item-level transparency and procedural transparency have opposite effects on benchmark validity. Revealing live benchmark items may improve reproducibility, but it also increases exposure to contamination and strategic optimization. By contrast, procedural transparency improves validity by increasing accountability without necessarily exposing live items.

Several common governance scenarios follow. A static public benchmark has high item-level transparency but limited renewal. Such benchmarks are useful for reproducibility and broad comparison, but their validity depreciates rapidly as models improve and as benchmark items become contaminated or optimized against. An opaque private benchmark may reduce item-level exposure, but if it lacks procedural transparency or institutional independence, it is vulnerable to capture. Its scores may be less contaminated in a narrow technical sense, but less credible as public signals of model quality. A stronger governance design separates the two forms of transparency. It protects live benchmark items while making the benchmark institution transparent: who governs it, how items are produced, how evaluators are selected, how contamination is tested, and when saturated items are retired.

The central policy implication is therefore not that benchmarks should be either fully public or fully private. Valid frontier evaluation requires protected live item pools, transparent procedures, institutional independence, sustained investment, and reliable access to elite evaluators.

\subsection{The Enterprise Deployment Gap: From Benchmark Performance to Production Readiness}

The benchmark ceiling problem has a direct operational manifestation that practitioners recognize even when they lack the analytical vocabulary to name it. When enterprise technology leaders insist that AI systems require rigorous evaluation before deployment, that human feedback loops are what make AI production-ready, or that reliability is the decisive factor in enterprise adoption, they are describing the downstream consequence of the problem this paper diagnoses upstream. A system that scores well on a partially saturated, potentially contaminated benchmark may perform very differently in a production environment where the task distribution does not match the benchmark distribution, where edge cases are common, and where the cost of failure is material.

This gap between benchmark performance and production reliability is not random noise; it is structurally predictable from the benchmark ceiling argument. Benchmark saturation means that high scores increasingly reflect mastery of the easy-to-medium portion of the difficulty distribution, precisely the portion least representative of the novel, ambiguous, high-stakes tasks that characterize real enterprise deployment. A model optimized to maximize benchmark performance may, in the limit, be specifically optimized against the items that would reveal its production weaknesses.

The human feedback loop that practitioners identify as the solution to production readiness is, in analytical terms, a form of post-deployment evaluation that substitutes for the frontier benchmark signal that pre-deployment evaluation fails to provide. When a deployment team instruments human-in-the-loop review of AI outputs in a production environment, it is generating the kind of high-judgment, low-codifiability evaluation data that the benchmark ceiling argument identifies as structurally scarce. It is, in effect, producing hard-tail evaluation items in real time, at the cost of production latency and expert reviewer time. This is expensive, fragile, and non-transferable: the evaluation data generated in one enterprise deployment does not easily generalize to another. It is the market's ad hoc response to the failure of pre-deployment benchmark infrastructure, and it is a costly substitute for the public good that is missing.

Connecting the practitioner diagnosis to the analytical framework developed in this paper suggests a unified account of why the evaluation problem is persistent. Individual organizations are investing heavily in human feedback loops, red-teaming, and post-deployment monitoring precisely because they recognize that pre-deployment benchmarks are inadequate. The problem is that these investments are private, fragmented, and non-cumulative: each organization is solving the same underlying problem independently, without contributing to or drawing from a shared stock of frontier evaluation infrastructure. The collective action failure that produces undersupply of public benchmark goods also produces redundant private substitution at enormous aggregate cost.

\section{Policy Implications and Research Agenda}
\label{sec:policy}

\subsection{Treating Benchmark Infrastructure as a Public Good}

The most fundamental policy implication of the analysis is that frontier AI evaluation infrastructure should be recognized and funded as a public good, on a par with other scientific measurement infrastructure: the NIST standards system, the FDA testing apparatus, or the financial audit framework. This means public investment in the production and maintenance of frontier benchmark suites, at a scale calibrated to the rate of model improvement and the governance stakes of reliable evaluation. It also means institutional independence: publicly funded evaluation infrastructure should be insulated, by design and by law, from the interests of the organizations whose systems are being evaluated.

Concretely, this would involve the establishment or expansion of publicly funded AI evaluation institutes, analogous to NIST or the UK's National Physical Laboratory, with mandates to produce and maintain frontier evaluation suites across a range of capability domains, with transparent methodology, independent governance, and regular public reporting. Such institutes should have dedicated programs for identifying, recruiting, and compensating elite evaluators from professional domains, recognizing that the wage premium evidence implies current market rates may underpay for the social value of this labor.

\subsection{Mandatory Transparency in Benchmark Design and Deployment}

Regulatory frameworks that rely on benchmark-based compliance determinations should require transparency about benchmark design, item sourcing, evaluator credentials, contamination testing, and the age and saturation status of the evaluation suites used. Organizations claiming regulatory compliance on the basis of benchmark performance should be required to disclose the difficulty distribution of the benchmark items used, whether the benchmark has been tested for training-data contamination, the credentials and compensation structure of evaluators who designed the hard-tail items, and the date of the benchmark's last substantive update.

These disclosure requirements would not eliminate strategic evaluation design, but they would make it legible and contestable. Third parties -- academic researchers, civil society organizations, competing AI developers, and regulators -- could use disclosed information to identify benchmarks that are saturated, contaminated, or strategically designed, and to challenge capability claims made on their basis. Transparency is a weak governance instrument, but it is a necessary precondition for stronger ones. The key design principle, however, is the distinction formalized above: transparency should attach primarily to procedures, governance, credentials, and validity testing, not necessarily to live item pools.

\subsection{A Research Agenda on Evaluation Economics}

This paper has developed a theoretical framework and presented preliminary empirical evidence, but it has also identified a set of empirical questions that are important and currently unanswered. How do item difficulty distributions actually evolve across specific benchmark suites under model improvement? What is the quantitative relationship between evaluator credential level and item difficulty in practice? How much of reported benchmark progress reflects genuine capability improvement versus saturation of contaminated items? What is the social value of a marginal hard-tail benchmark item, and how does it compare to the compensation currently paid for its production? What is the empirical value of the appropriability parameter $\alpha$, and does it vary systematically across evaluation domains?

Answering these questions requires coordinated data collection efforts across AI developers, evaluation organizations, and academic researchers. We call for the establishment of a benchmark observatory: a shared data infrastructure that tracks benchmark performance trajectories, item difficulty distributions, contamination levels, evaluator characteristics, and benchmark governance arrangements over time. Such an observatory, analogous to the long-term datasets that underpin environmental and economic policy, would provide the empirical foundation for evidence-based governance of AI evaluation infrastructure.

\section{Conclusion}
\label{sec:conclusion}

Benchmarks are not neutral measuring instruments. They are constructed artifacts, produced by human judgment, subject to degradation and strategic distortion, and currently governed by incentive structures that systematically underinvest in genuine discriminating power at the frontier. The benchmark ceiling problem -- the progressive exhaustion of evaluation signal as models saturate the easy majority of items while the hard tail authored by elite evaluators becomes the sole source of discrimination -- is not a minor technical inconvenience but a structural challenge to the credibility of AI governance.

The formal model developed in this paper establishes the economic logic of this challenge. Benchmark scores are public signals of latent model quality, but their precision depends on benchmark validity. Validity depreciates as frontier capability rises and as contamination or strategic optimization exposure increases. Valid signal concentrates in hard-tail items near the frontier. The replacement cost of those items rises with frontier capability because they require scarce expert judgment. Private producers underinvest in benchmark validity relative to the social optimum, and firms have increasing incentives to shape benchmark design as the stakes attached to benchmark performance rise.

The empirical evidence from micro1 supports the scarcity premise that underlies this model. The right-skewed wage premium distribution, the premium gradient between top- and bottom-quartile tasks, and the domain-specific concentration of that premium in low-codifiability, high-judgment tasks all suggest that the evaluative 1\% is genuinely scarce. The labor needed to maintain valid frontier evaluation is not interchangeable with scaled commodity annotation.

The governance implications are direct. The solution is not a simple choice between full transparency and opacity. Item-level transparency can accelerate contamination and strategic optimization; procedural transparency can improve accountability without exposing live items. Valid frontier evaluation therefore requires protected item pools, transparent procedures, institutional independence, sustained renewal investment, and reliable access to elite evaluator labor.

The infinite mile is not a romantic metaphor for human irreplaceability. It is a specific, measurable claim about where valid signal in frontier AI evaluation comes from, who produces it, and why the production of that signal is subject to market failures that public institutions must correct. The governance of AI capability is only as reliable as the evaluation infrastructure on which it depends, and that infrastructure currently rests on the underfinanced, underinstitutionalized, and underacknowledged work of the evaluative 1\%.

\bibliography{benchmark}
\bibliographystyle{apalike}

\end{document}